\def\draft{\fbox{\tiny draft: \today}}
\let\draft=\relax
\title{Tropical probability theory and an application to the entropic cone}
\author[RM]{R. Matveev}
\author[JWP]{J. W. Portegies}
\begin{document}
\thispagestyle{fancy} 
\begin{abstract}
In a series of articles, we have been developing a theory of
\emph{tropical diagrams of probability spaces}, expecting it to be
useful for information optimization problems in information theory and
artificial intelligence. In this article, we give a summary of our
work so far and apply the theory to derive a dimension-reduction
statement about the shape of the entropic cone.
\end{abstract}

\maketitle

\son
\jon

\section{Introduction}
With the aim of developing a systematic approach to an important class
of problems in information theory and artificial intelligence, we
started in \cite{Matveev-Asymptotic-2018} the development of a theory
of \emph{tropical diagrams of probability spaces}. One of our intended
applications is to characterize, or at least derive important
properties of, the \emph{entropic cone}: an important open problem in information theory, to which Fero  Mat\'u\v s made invaluable contributions.

In this article, we give a summary of our work on tropical diagrams so
far and apply the technology to derive a statement about
the entropic cone.

We briefly recall the definition of the entropic cone. 
Given a collection of $k$ random
variables $\Xsf_1, \dots, \Xsf_k$ and a subset $I \subset \{ 1, \dots,
k \}$, we can record the entropy of the joint random variable
$\Xsf_I$. This way, we get a function from $\Lambdabf_k$ to $\Rbb$,
where $\Lambdabf_k$ denotes the set of nonempty subsets of $\{
1, \dots, k\}$. We interpret the function as an element of the vector
space $\Rbb^{\Lambdabf_k}$ and call it the entropy vector of the
random variables $\Xsf_1, \dots, \Xsf_k$. In general, we say that a
vector in $\Rbb^{\Lambdabf_k}$ is entropically representable if it is
the entropy vector of some collection of random variables
$\Xsf_1, \dots, \Xsf_k$.

The entropic cone is the closure of the set of entropically
representable vectors. Entropies of random variables, conditional
entropies, mutual information and conditional mutual information are
all nonnegative. These conditions are called the Shannon
inequalities. For $k \leq 3$, the Shannon inequalities completely
describe the entropic cone, but for $k\geq 4$ the situation is much
more complicated. Zhang and Yeung showed that the entropic cone and
the submodular cone (i.e. the cone cut out by Shannon inequalities)
are different \cite{Zhang-Characterization-1998}, by identifying a
non-Shannon inequality satisfied by all entropically representable
vectors. Subsequently, more non-Shannon inequalities were discovered,
e.g.~\cite{Makarychev-New-2002, Dougherty-Six-2006}.

In \cite{Matus-Infinitely-2007} Mat\'u\v s discovered several infinite
families of linear inequalities satisfied by the entropic cone and
used it in a clever way to show that the cone is \emph{not
polyhedral}.
Other infinite families of \emph{information inequalities} were
found in~\cite{Dougherty-Non-Shannon-2011} as well as many (more than
200) sporadic inequalities.

In the case of four random variables, the entropic cone is a closed
convex cone in $\Rbb^{15}$. Using techniques developed
in~\cite{Matveev-Asymptotic-2018, Matveev-Tropical-2019,
Matveev-Conditioning-2019, Matveev-Arrow-2019}, we show how the
dimension of the problem of determining the entropic cone could be
reduced from 15 to 11.

During our work on the development of tropical probability we
were greatly influenced by the article of
Gromov \cite{Gromov-Search-2012} and by numerous discussions with Fero
Mat\'u\v s as well as by his published
work, such as \cite{Matus-Probabilistic-1993, Matus-Conditional-1995,
Matus-Infinitely-2007}.

\section{Tropical Diagrams}
The language of random variables was introduced by Fr\'echet,
Kolmogorov and others, so that joint distributions are automatically
defined. For our purposes, this is not a convenient setup, as we often
need to vary the joint distributions. That's why we use a different
language of diagrams of probability spaces, which we introduce
below. A more detailed discussion and proofs of the statements below
can be found in~\cite{Matveev-Asymptotic-2018},%
~\cite{Matveev-Tropical-2019} and~\cite{Matveev-Arrow-2019}.
  
\subsection{Probability spaces}
For the purposes of this article, a \term{probability space} is a set
with a probability measure on it which is supported on a finite
subset. A \term{reduction} from one probability space to another is an
equivalence class of measure-preserving maps, where two maps are
considered equivalent if they coincide on a set of full measure. Note
that the target space of a random variable taking values in a finite
set is a probability space according to this definition.

The \term{tensor product} $X \otimes Y$ of two probability spaces $X$
and $Y$ is the independent product.

\subsection{Diagrams of probability spaces}
We will consider commutative diagrams of probability spaces and
reductions, such as a \term{two-fan} and a \term{diamond}, pictured
below
\[
\tageq{fan-and-diamond}
\begin{cd}[row sep=tiny, column sep=small]
\mbox{}
\& 
Z
\arrow{dl}
\arrow{dr}
\&
\\
X
\&
\&
Y
\end{cd}
\quad\quad
\begin{cd}[row sep=-1mm,column sep=small]
\mbox{}
\& 
Z
\arrow{dl}
\arrow{dr}
\&
\\
X
\arrow{dr}
\&
\&
Y
\arrow{dl}
\\
\&
W
\&
\end{cd}
\]
In these diagrams, $X$, $Y$, $Z$ and $W$ are probability spaces, and
the arrows are reductions. To speak about general diagrams, we will
need to specify the arrangement of probability spaces and reductions,
i.e.~we need to record the underlying combinatorial structure. There
are several, equivalent, ways to do so: using a poset category, a
partially ordered set (poset), or a directed acyclic graph
(DAG) with some additional properties as described below. From
our perspective, the language of categories is most convenient for
this purpose, but it may not be as familiar as the other two
concepts. That is why we will provide a dictionary to convert from one
setup to the other.

\subsubsection{Categories, posets and DAGs}
A \term{poset category} is a finite category $\Gbf$ such that for any
pair of objects $i, j \in \Gbf$ there is at most one morphism either
way. 
We will require the poset categories used for indexing
diagrams to have an additional property, that we describe below after
introducing some convenient terminology.

For a morphism $i\to j$ in $\Gbf$, the object $i$ will be called
an \term{ancestor} of $j$ and object $j$ will be called a \term{descendant}
of $i$.

An \term{indexing category} $\Gbf$ is a finite poset category such
that for any two objects $i,j\in\Gbf$ there exists a \term{minimal
common ancestor $\hat\imath$}, that is an object $\hat\imath$ wich is
an ancestor to both $i$ and $j$ and such that any other common
ancestor of $i$ and $j$ is also an ancestor of $\hat\imath$.
For an interested reader an example of a poset category
that fails this property is shown below.
\[
\begin{cd}[row sep=-1.9mm]
  \mbox{}
  \&
  m
  \arrow{dr}
  \mbox{}
  \\
  k
  \arrow{dd}
  \&
  \mbox{}
  \&
  l
  \arrow{dd}
  \arrow{ddll}
  \\
  \mbox{\rule{0mm}{3mm}}
  \\
  i
  \&\&
  j
  \arrow[from=uull,crossing over]
  \arrow[from=uuul,to=uull]
\end{cd}
\]

Given a poset $(P, \geq)$ such that any subset in $P$ has a supremum
(a least common upper bound), one can construct an indexing
category $\Gbf$, having as objects the points in the poset, and a
unique morphism $i \to j$ for any pair $i \geq j$.

Starting with a DAG, one can construct a
poset category by taking the transitive closure of the DAG and
considering vertices as objects and arrows as morphisms.
The translation of the defining property of indexing categories is
straightforward in the DAG language. 

A \term{fan} in a category is a pair of morphisms with the same domain 
$(i\ot k\to j)$. Such a fan is called \term{minimal} if whenever it is
included in a commutative diagram
\[       
\begin{cd}[row sep=-1.5mm]
  \mbox{}\&
  k
  \arrow{dl}
  \arrow{dd}
  \arrow{dr}
  \&
  \\
  i
  \arrow{dr}
  \&\mbox{}\&
  j
  \arrow{dl}
  \\
  \&
  l
\end{cd}
\]       
the verical arrow $k\to l$ must be an isomorphism.

Indexing categories have the following useful properties, which are
elementary to establish. First, for any pair of objects $i,j$ in an
indexing category $\Gbf$, there exists a \term{unique minimal fan}
in $\Gbf$ with target objects $i$ and $j$. Secondly, any indexing
category is \term{initial}, i.e. it has
an \term{initial object} that is an ancestor to
any other object in $\Gbf$.

\subsubsection{Diagrams} 
A \term{diagram of probability spaces} is a functor $\Xcal$ from an
indexing category $\Gbf = \{i; \gamma_{ij}\}$ to the category of
probability spaces. Essentially, this means that given an indexing
category, poset or DAG, we get a $\Gbf$-diagram of probability spaces
$\Xcal=\{X_i; \chi_{ij}\}$ by assigning to each object/vertex $i$ a
probability space $X_i$ and to each morphism/arrow $\gamma_{ij}$ a
reduction $\chi_{ij}$, requiring that the resulting diagram
commutes. We denote the set of all $\Gbf$-diagrams of probability
spaces by $\prob\langle \Gbf \rangle$.

\subsubsection{Full diagrams and random variables}
\label{se:full}
Important examples of diagrams are $\Lambdabf_n$-diagrams, which we
call \term{full diagrams}, where $\Lambdabf_n$ is the poset of
non-empty subsets of the set $\{1, \dots, n\}$ ordered by
inclusion. Given an $n$-tuple of random variables
$(\Xsf_1, \dots, \Xsf_n )$ we can construct a $\Lambdabf_n$-diagram
\[
  \< \Xsf_1, \dots, \Xsf_n \> := \{ X_I ; \chi_{IJ} \}
\]
by setting $X_I$ equal to the target space of $(X_i : i \in I)$ with
the induced distribution and $\chi_{IJ}$ equal to the natural
projections.  On the other hand, starting with a $\Lambdabf_n$-diagram
we can construct an $n$-tuple of random variables as reductions from
the initial space to $n$ terminal spaces.
Diagrams of combinatorial type $\Lambdabf_{2}$ are two-fans,
  pictured above in (\ref{eq:fan-and-diamond}), and $\Lambdabf_{1}$-diagrams are single probability
  spaces.

\subsubsection{Diagrams of diagrams}
A \term{reduction} $\rho:\Xcal \to \Ycal$ from a $\Gbf$-diagram
$\Xcal$ to a $\Gbf$-diagram $\Ycal$ is a natural transformation from
(the functor) $\Xcal$ to $\Ycal$. It amounts to specifying a reduction
$\rho_i : X_i \to Y_i$ for every $i$, such that the diagram
obtained from $\Xcal$, $\Ycal$ and the $\rho_i$'s is
commutative. Thus, $\prob\langle \Gbf \rangle$ is itself a category.

Hence, we can also construct diagrams of diagrams. Most important for
us are two-fans of $\Gbf$ diagrams,
\[
\begin{cd}[row sep=0mm]
\mbox{}
\& 
\Zcal
\arrow{dl}
\arrow{dr}
\&
\\
\Xcal
\&
\&
\Ycal
\end{cd}
\]
where $\Xcal$, $\Ycal$ and $\Zcal$ are $\Gbf$-diagrams, and the arrows
are reductions of diagrams.  For the space of $\Hbf$-diagrams of
$\Gbf$-diagrams we will use the notation $\prob\<\Gbf\>\<\Hbf\> =
\prob\<\Gbf, \Hbf\>$. Note that an $\Hbf$-diagram of $\Gbf$-diagrams
can equivalently be interpreted as a $\Gbf$-diagram of
$\Hbf$-diagrams.

\subsubsection{Minimal diagrams}
A $\Gbf$-diagram  $\Xcal$ is called \term{minimal}
if it maps minimal fans in $\Gbf$ to minimal fans in the target category.

A \term{minimization} of a two-fan $\hat\Zcal := (\Xcal \ot
  \Zcal \to \Ycal)$ of either of probability spaces or of diagrams
is the minimal fan
$\hat \Ccal$ and a reduction 
\[
\tageq{minimal}
\begin{cd}[row sep=5mm]
\Xcal
\arrow{d}{f}
\&
\Zcal
\arrow{l}
\arrow{d}{h}
\arrow{r}
\&
\Ycal
\arrow{d}{g}
\\
\Acal
\&
\Ccal
\arrow{l}
\arrow{r}
\&
\Bcal
\end{cd}
\]
such that $f$ and $g$ are isomorphisms.

It is shown in \cite[Proposition 2.1]{Matveev-Asymptotic-2018} that a
minimization always exists and is unique up to isomorphism.

\skipthis{
A general $\Gbf$-diagram will be called \term{minimal} if for any pair
of distinct spaces in it, it also contains a minimal fan with the
spaces as feet.  We will denote the space of minimal $\Gbf$-diagrams
by $\prob\<\Gbf\>_\msf$.
}

We will also refer to a minimal two-fan with $\Xcal$ and $\Ycal$ as
targets, as a coupling between $\Xcal$ and $\Ycal$.

\subsubsection{Tensor product and conditioning}
The \term{tensor product} of two $\Gbf$-diagrams $\Xcal=\{X_i;
\chi_{ij}\}$ and $\Ycal =\{Y_i; \upsilon_{ij}\}$ is $\Xcal \otimes
\Ycal := \{X_i\otimes Y_i; \chi_{ij}\times\upsilon_{ij}\}$.

If $\Xcal$ is a $\Gbf$-diagram, and $U$ is a probability space in
$\Xcal$, then the whole diagram $\Xcal$ can be conditioned on an
outcome $u \in U$ with positive weight. We denote the conditioned
diagram by $\Xcal \rel u$. A precise definition of this construction
is given in \cite[Section 2.8]{Matveev-Asymptotic-2018}.

\subsection{The intrinsic and asymptotic entropy distances}
For a given a $\Gbf$-diagram $\Xcal$ we may evaluate the entropies of the
individual probability spaces, which gives a map
\[
\ent_* : \prob\<\Gbf\> \to \Rbb^\Gbf
\]
where the target space $\Rbb^\Gbf$ is the vector space of all
real-valued functions on the set of objects in $\Gbf$, equipped with
the $\ell^1$-norm. The entropy is a homomorphism in the sense that
$\ent_*(\Xcal \otimes \Ycal) = \ent_*(\Xcal) + \ent_*(\Ycal)$.

Given a two-fan of $\Gbf$-diagrams $\Kcal = (\Xcal \ot \Zcal \to
\Ycal)$, the \term{entropy distance} between $\Xcal$ and $\Ycal$ is
defined by
\[
  \kd(\Kcal) 
  := 
  \| \ent_*(\Zcal) - \ent_*(\Xcal) \|_1 + 
  \| \ent_*(\Zcal) - \ent_*(\Ycal) \|_1
\]
We use the entropy distance as a measure of deviation of a fan
$\Kcal$ from being an isomorphism between $\Xcal$ and $\Ycal$. Indeed,
the entropy distance $\kd(\Kcal)$ vanishes if and only if both arrows
in $\Kcal$ are isomorphims.

We obtain the \term{intrinsic entropy distance} $\ikd(\Xcal, \Ycal)$
between two $\Gbf$-diagrams $\Xcal$ and $\Ycal$ by taking an infimum
over all couplings between $\Xcal$ and $\Ycal$
\[
  \ikd(\Xcal, \Ycal) 
  := 
  \inf \{ \kd(\Kcal) : 
  \ \Kcal \text{ coupling between } \Xcal \text{ and } \Ycal \}
\]
For probability spaces, the intrinsic entropy distance was introduced in \cite{Kovavcevic-Hardness-2012, Vidyasagar-Metric-2012}.

We also define the \term{asymptotic entropy distance} $\aikd(\Xcal,
\Ycal)$ by
\[
  \aikd(\Xcal, \Ycal) 
  := 
  \lim_{n \to \infty} \frac{1}{n} \ikd(\Xcal^n, \Ycal^n)
\]
where $\Xcal^n$ denotes the $n$-fold independent product of $\Xcal$. 

Both $\ikd$ and $\aikd$ are pseudo-distance functions in that they
satisfy all axioms of a distance function, except that they may vanish
on pairs of non-identical points, see
~\cite{Matveev-Asymptotic-2018} and~\cite{Matveev-Tropical-2019} for the proofs.  

\subsection{Tropical diagrams}
Tropical objects, as for instance encountered in algebraic geometry,
are, roughly speaking, divergent sequences of classical objects
(e.g.~algebraic varieties), renormalized by viewing them on a log
scale with increasing base.

The space of tropical diagrams of probability spaces is defined along
similar lines: it consists of certain divergent sequences of
diagrams and is endowed with an asymptotic entropy distance, thus
achieving a similar renormalization.

Our description below is extremely brief. For details and proofs, we
refer the reader to \cite{Matveev-Tropical-2019}.

\subsubsection{Quasi-linear sequences}
We define the \term{linear sequence} generated by a $\Gbf$-diagram
$\Xcal$ as the sequence $\bernoulli{\Xcal} := (\Xcal^n : n \in
\Nbb_0)$ and we define the distance between two such sequences by
\[
  \aikd\Big(\bernoulli{\Xcal}, \bernoulli{\Ycal}\Big) 
  := 
  \lim_{n \to \infty} \frac{1}{n}\ikd\Big(\Xcal^n, \Ycal^n\Big)
\]
Tropical diagrams of probability spaces will be sequences that are
almost linear, so that it allows us to define algebraic operations on
them, and establish completeness of the space of all tropical
diagrams.

We call a sequence $[\Xcal]:= (\Xcal(n): n \in \Nbb_0)$ \term{quasi-linear}
if for every $m, n \in \Nbb$,
\[
\aikd\Big(\Xcal(m+n), \Xcal(m) \otimes \Xcal(n) \Big) \leq C (m + n)^{3/4}
\]
and define the distance between two quasi-linear sequences by 
\[
  \aikd\Big( [\Xcal], [\Ycal] \Big) 
  := \lim_{n \to \infty} \frac{1}{n} \aikd\Big(\Xcal(n), \Ycal(n)\Big)
\]
and denote by $\prob[\Gbf]$ the (pseudo-)metric space of all
quasi-linear sequences endowed with $\aikd$. Two quasi-linear
sequences will be called asymptotically equivalent if they are zero
distance apart.  Equivalence classes of quasi-linear sequences will be
called \term{tropical diagrams of probability spaces}.  In our
discussions we will sometimes be sloppy, and {make no distinction
  between equivalence classes and their representatives (quasi-linear
  sequences)}. This is harmless, as operations we consider are all
$\aikd$-continuous and preserve asymptotic equivalence.

The sum of two sequences is defined as element-wise tensor product,
and multiplication by a scalar $\lambda \geq 0$ is defined by
\[
\lambda \cdot [\Xcal] 
:= ( \Xcal( \lfloor \lambda \cdot n \rfloor ) : n \in \Nbb_0)
\]
The addition and scalar multiplication satisfy the usual associative,
commutative and distributive laws up to asymptotic
equivalence. Therefore, the space $\prob[\Gbf]$ has the structure of a
convex cone.

The asymptotic distance $\aikd$ is $1$-homogeneous
\[
  \aikd\Big( \lambda \cdot [\Xcal], \lambda \cdot [\Ycal]\Big) 
  = \lambda \aikd\Big( [\Xcal], [\Ycal]\Big)
\]
and translation-invariant,
\begin{align*}
  \aikd\Big( [\Xcal] + [\Zcal], [\Ycal] + [\Zcal] \Big) 
  &=\aikd\Big( [\Xcal], [\Ycal] \Big)
\end{align*}
We show in \cite{Matveev-Tropical-2019} that the space $\prob[\Gbf]$ is
complete. Together with the algebraic structure, it implies that
$\prob[\Gbf]$ is a closed convex cone in some (generally
infinite-dimensional) Banach space $B$. We call elements in the dual
space $B^*$ \term{entropic quantities}. The entropy functional defined by
\[
\tageq{tropical-entropy}
\ent_*([\Xcal]) := \lim_{n \to \infty} \frac{1}{n} \ent_*(\Xcal(n)) 
\]
yields an example of such dual elements.

\subsection{Homogeneous diagrams and asymptotic equipartition property}
We call a diagram of probability spaces $\Xcal$ \term{homogeneous} if
its automorphism group $\Aut(\Xcal)$ acts transitively on every space
in $\Xcal$.  

Examples of homogeneous diagrams can be constructed in
the following way. A \term{$\Gbf$-diagram
of groups} is a pair consisting of an ambient finite group $G$
and a $\Gbf$-diagram of subgroups of $G$, $(H_i : i \in \Gbf)$, where
the arrows are inclusions. Starting with a $\Gbf$-diagram of groups, we construct
a $\Gbf$-diagram of probability spaces $\Xcal := \{ X_i
; \chi_{ij} \}$ by setting $X_i := G / H_i$ with the uniform measure and defining the
reduction $\chi_{ij}$ to be the natural projection
$\chi_{ij} : G/H_i \to G/H_j$ whenever $H_i \subset H_j$. In fact,
every homogeneous diagram arises in this way \cite[Section
2.7.1]{Matveev-Asymptotic-2018}. We call a homogeneous
diagram \term{Abelian} if it can be constructed in this way with
Abelian $G$.

Starting with a diagram of groups $\set{G; H_{i}: i\in \Gbf}$ the
resulting homogeneous diagram will be minimal if and only if for any
$i,j\in\Gbf$ there exists $k\in\Gbf$, such that $H_{k}=H_{i}\cap
H_{j}$.  If a diagram of groups satisfies this property, we will call
it \term{minimal} as well. When $\Gbf$ is a full
indexing category $\Gbf=\Lambdabf_{n}$, the description of minimal diagrams of
groups is especially simple: One needs to specify only the terminal
groups, others being obtained by appropriate intersections. We will
write $(G;\;H_{1},\ldots,H_{n})$ for such minimal
$\Lambdabf_{n}$-diagram of groups.

We denote the space of quasi-linear sequences of homogeneous
$\Gbf$-diagrams by $\prob[\!\Gbf]_\hsf$, the subspace of sequences of
Abelian $\Gbf$-diagrams by $\prob[\Gbf]_{\mathsf{Ab}}$, and the space 
quasi-linear sequences of minimal $\Gbf$-diagrams by $\prob[\Gbf]_\msf$.

The Asymptotic Equipartition Property for diagrams of
  probability spaces that we have shown in
\cite[Theorem 6.1]{Matveev-Asymptotic-2018} essentially says that
  any linear sequences of diagrams of probability spaces is
asymptotically equivalent to a quasi-linear sequence of homogeneous
diagrams. Together with the density of linear sequences in
$\prob[\Gbf]$, ~\cite[Theorem 5.2]{Matveev-Tropical-2019}, it implies the
following theorem.

\begin{theorem}{p:aep-tropical}
  \ \ For any indexing category $\Gbf$ the spaces $\prob[\Gbf]_{\hsf}$ and
  $\prob[\Gbf]_{\hsf,\msf}$ are dense in $\prob[\Gbf]$ and
  $\prob[\Gbf]_{\msf}$, respectively.
\end{theorem}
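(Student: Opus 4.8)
The plan is to deduce both density statements formally from the two results just quoted, namely the density of linear sequences (\cite[Theorem 5.2]{Matveev-Tropical-2019}) and the Asymptotic Equipartition Property (\cite[Theorem 6.1]{Matveev-Asymptotic-2018}), together with the triangle inequality for $\aikd$ on the complete cone $\prob[\Gbf]$.

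For the first assertion, fix a tropical diagram $[\Xcal]\in\prob[\Gbf]$ and $\varepsilon>0$. Using the density of linear sequences I would first choose a single $\Gbf$-diagram $\Ycal$ whose generated linear sequence $\bernoulli{\Ycal}$ satisfies $\aikd([\Xcal],\bernoulli{\Ycal})<\varepsilon$. The AEP then supplies a quasi-linear sequence of homogeneous diagrams $[\Zcal]\in\prob[\Gbf]_{\hsf}$ asymptotically equivalent to $\bernoulli{\Ycal}$, i.e. $\aikd(\bernoulli{\Ycal},[\Zcal])=0$. The triangle inequality
\[
\aikd([\Xcal],[\Zcal])\leq \aikd([\Xcal],\bernoulli{\Ycal})+\aikd(\bernoulli{\Ycal},[\Zcal])<\varepsilon
\]
then produces a homogeneous tropical diagram within $\varepsilon$ of $[\Xcal]$, which is exactly density of $\prob[\Gbf]_{\hsf}$ in $\prob[\Gbf]$.

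For the second assertion I would run the same argument inside the minimal subspace. The cleanest route I see is through the minimization operation $\mix$: extend $\mix$ from two-fans, as in (\ref{eq:minimal}), to all of $\prob[\Gbf]$ by minimizing every minimal fan of a diagram simultaneously, descend it to tropical diagrams, and verify that it is a non-expanding retraction of $\prob[\Gbf]$ onto $\prob[\Gbf]_{\msf}$ that fixes $\prob[\Gbf]_{\msf}$ pointwise and sends $\prob[\Gbf]_{\hsf}$ into $\prob[\Gbf]_{\hsf,\msf}$. Granting these properties, for a minimal $[\Xcal]\in\prob[\Gbf]_{\msf}$ I take the homogeneous approximant $[\Zcal]$ from the previous paragraph and set $[\Zcal']:=\mix([\Zcal])\in\prob[\Gbf]_{\hsf,\msf}$; since $\mix$ fixes $[\Xcal]$ and is $1$-Lipschitz,
\[
\aikd([\Xcal],[\Zcal'])=\aikd(\mix([\Xcal]),\mix([\Zcal]))\leq \aikd([\Xcal],[\Zcal])<\varepsilon,
\]
yielding density of $\prob[\Gbf]_{\hsf,\msf}$ in $\prob[\Gbf]_{\msf}$.

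The \emph{main obstacle} is precisely the behaviour of $\mix$, equivalently a minimality-preserving form of the AEP. I must show that minimizing does not increase $\aikd$, which should follow because minimization is realized by the reduction $h$ of (\ref{eq:minimal}) and reductions only decrease entropies, so they can only shrink the entropy distance $\kd$ along any coupling and hence the infimum $\ikd$ and its asymptotic limit $\aikd$. More delicately, I must show that the group-theoretic homogenization of \cite[Section 2.7.1]{Matveev-Asymptotic-2018} carries minimal diagrams to minimal ones, i.e. that transitive automorphism actions survive passage to the minimal quotient $H_k=H_i\cap H_j$ and that minimal fans remain minimal after homogenization. Checking this compatibility between homogenization and the fan-wise minimality condition is where the real work lies; once it is in place, both density statements follow from the triangle inequality exactly as above.
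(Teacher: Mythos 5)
Your treatment of the first assertion coincides with the paper's own (very short) argument: the paper derives the theorem by combining the density of linear sequences \cite[Theorem 5.2]{Matveev-Tropical-2019} with the AEP \cite[Theorem 6.1]{Matveev-Asymptotic-2018} via the triangle inequality for $\aikd$, exactly as in your first paragraph. There is no discrepancy there.

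For the second assertion, however, your proposal has a genuine gap, which you yourself flag: the whole argument is conditional on the retraction $\mix$ being (a) defined on all $\Gbf$-diagrams, (b) descending to tropical diagrams, (c) non-expanding, and (d) carrying homogeneous diagrams to homogeneous minimal ones, and none of these is established. Point (a) is already nontrivial: the paper (following \cite[Proposition 2.1]{Matveev-Asymptotic-2018}) provides minimization only for two-fans, and ``minimizing every minimal fan simultaneously'' in a general diagram requires the various quotients to assemble consistently into a commutative diagram, which is not automatic. Worse, the one justification you do offer for (c) is incorrect as stated: $\kd$ is a sum of \emph{differences} of entropies, so the fact that reductions decrease entropies says nothing by itself about how $\ikd(\mix\Xcal,\mix\Ycal)$ compares to $\ikd(\Xcal,\Ycal)$. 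To get any bound you must manufacture a coupling of the minimizations from a coupling $\Kcal$ of the originals; the natural construction (replace the top space of each minimal fan in $\Kcal$ by its image in the product of its feet) does produce one, but the entropy bookkeeping picks up extra terms of the form $\ent(Z_i)-\ent(X_i)$, i.e.\ conditional entropies of the coupling over its feet, so one obtains a Lipschitz constant strictly larger than $1$ rather than non-expansiveness. (Any finite constant would in fact suffice for your argument, but that weaker statement is also not proved, and (b) additionally requires checking that $\mix$ commutes with tensor powers.)

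None of this machinery is needed, and it is not what the paper relies on: both cited ingredients already respect minimality. The linear sequences approximating a quasi-linear sequence $[\Xcal]$ in \cite[Theorem 5.2]{Matveev-Tropical-2019} are the (rescaled) sequences $\bernoulli{\Xcal(n)}$ generated by its own terms, so when $[\Xcal]\in\prob[\Gbf]_{\msf}$ these approximants are linear sequences of minimal diagrams; and the homogeneous approximant produced by the AEP for a minimal diagram can be taken minimal --- in the group-theoretic description this amounts to closing the subgroup diagram under intersections, $H_k=H_i\cap H_j$, which is precisely the paper's characterization of minimal homogeneous diagrams. With these two observations the second density statement follows by running your first paragraph verbatim inside $\prob[\Gbf]_{\msf}$, with no retraction at all.
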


Intuitively, according to the theorem above, one may think of a
tropical diagram as a homogeneous diagram of very large probability
spaces. Thus, whenever one wants to evaluate a continuous linear
functional on a diagram $\Xcal$, one may assume that it is homogeneous
and consists of arbitrarily large spaces. We take advantage of this
point of view in the next section.

As a trivial, but enlightening, example, consider a two-fan $(X \ot Z
\to Y)$.  By Theorem \ref{p:aep-tropical}, a high power $(X^n \ot Z^n
\to Y^n)$ can be approximated by a homogeneous fan $H_X \ot H_Z \to
H_Y$. The entropies of $X$, $Z$ and $Y$ (and therefore also the mutual
information $\ent(X) + \ent(Y) - \ent(Z)$ between $X$ and $Y$) can be
established by just counting in the homogeneous fan:
$\ent(X) \approx \frac{1}{n} \log |H_X|$ where $|H_X|$ denotes the
cardinality of $H_X$. However, the three entropies do not determine
the asymptotic equivalence class of the fan, there are many more
(in fact, infinitely many) independent entropic
  quantities. But all can be determined
from the homogeneous approximation.

\subsection{Tropical conditioning}
\subsubsection{Conditioning}
One of the advantages of homogeneous diagrams is that if a homogeneous
diagram $\Xcal$ contains a space $U$, then the isomorphism class of
the conditioned diagram $\Xcal\rel u$ does not depend on the choice of
an atom $u\in U$.
  
Since any tropical diagram is asymptotically equivalent to a
homogeneous tropical diagram, we can use this independence of $u$ to
define an operation of conditioning of a tropical $\Gbf$-diagram
$[\Xcal]$ on a space $[U]$ in it, obtaining another tropical
  $\Gbf$-diagram denoted by $[\Xcal \rel U]$.  In the tropical
setting the diagram $[\Xcal\rel U]$ depends
(Lipschitz-)continuously on $[\Xcal]$ and
$[U]$. This subject is discussed in more details
in~\cite{Matveev-Arrow-2019}. 
  
\subsubsection{Entropy and mutual information}
\Label{s:entropy}
Now that we defined $[\Xcal\rel U]$ as a tropical diagram, its entropy
$\ent_*([\Xcal \rel U])$ is defined by the limit
in~(\ref{eq:tropical-entropy}). At the same time, it equals the limit
\[
  \lim_{n \to \infty} \frac{1}{n} \ent_*( \Xcal(n) \rel U(n) ) := 
  \lim_{n \to \infty} \frac{1}{n} \int_{u \in U(n)} \ent_*(\Xcal(n) \rel u) \d p(u)
\]
  
In~\cite{Matveev-Tropical-2019} it is shown that the
  space of single tropical probability spaces, $\prob[\!\Lambdabf_{1}\!]$,
is isomorphic to $\Rbb_{\geq 0}$, with the
isomorphism given by the entropy. Thus, a tropical probability space is
completely determined by its entropy, and we will simply write
\begin{itemize}
\item 
  $[X]$ for $\ent([X])$.
\item 
  $[X : Y]:= [X] + [Y] - [Z]$ for the mutual information between $X$
  and $Y$ in the minimal two-fan $[X] \ot[] [Z] \to[] [Y]$
\item 
  $[X : Y\rel U]:= [\hat X] + [\hat Y] - [\hat Z] -[U]$ for the
  conditional mutual information between $X$ and $Y$, where $[\hat
    X]$, $[\hat Y]$, $[\hat Z]$ and $[U]$ are the spaces in the
  minimal diagram
\[
\begin{cd}[column sep=normal, row sep=-2mm]
  \mbox{}
  \&{}
  \&{}
    [\hat Z]
    \arrow{dl}
    \arrow{dr}
  \&{}
  \&{}
  \\{}
  \&{}
    [\hat X]
    \arrow{dl}
    \arrow{dr}
  \&{}
  \&{}
    [\hat Y]
    \arrow{dr}
    \arrow{dl}
  \&{}
  \\{}
    [X]
  \&{}
  \&{}
    [U]
  \&{}
  \&{}
    [Y]
\end{cd}
\]
\end{itemize}

\section{Arrow Contraction and Expansion}
In this section we describe two operations on tropical diagrams,
\term{arrow contraction and expansion}. Given a tropical diagram
$[\Zcal]$, arrow contraction is a modification of the diagram in such a
way that a certain arrow becomes an isomorphism, while keeping control
of what happens to some other parts of $[\Zcal]$. Arrow expansion is
an inverse operation. We will apply these techniques in the
next section to derive a dimension reduction result for the entropic
cone. 

The full construction is quite involved. Here we will only describe a
corollary necessary for our purposes, and refer the reader to
\cite{Matveev-Arrow-2019} for the full results and details.

\subsection{Admissible and reduced sub-fans}
Suppose $\Zcal$ is a $\Gbf$-diagram and $X$ is an element in it.  By
the \term{ideal} generated by $X$ we wean the sub-diagram $\lc X\rc$
of $\Zcal$, that consists of the target spaces of all morphisms
starting in $X$ and all (available in $\Zcal$) arrows between them. We
will sometimes refer to spaces in $\lc X\rc$ as the \term{descendants}
of $X$. The ideal generated by a space $X$ included in some diagram will
be denoted $\lc X\rc$.

If $[\Zcal]$ is a diagram of tropical probability spaces with the
tropical space $[X]$ in it, in order to unclutter notations we will
write
\[
\lc X \rc := \lc \rule{0mm}{2ex}[X] \rc
\]

An \term{admissible sub-fan} $(X\ot Z\to U)$ in a diagram $\Zcal$ is
a \emph{minimal} sub-fan such that the space $U$ is terminal, i.e.~it
is not the domain of definition of any (non-identity) morphism in
$\Zcal$. An admissible fan will be called reduced if $Z\to X$ is an
isomorphism.

A diagram with an admissible fan is illustrated schematically in
Figure~\ref{fig:lowering}. Two more concrete examples
are shown in Figures~\ref{fig:lowering-in-L2}
and~\ref{fig:lowering-in-L3}.

\usetikzlibrary{decorations.pathmorphing} 
\begin{figure}[h]
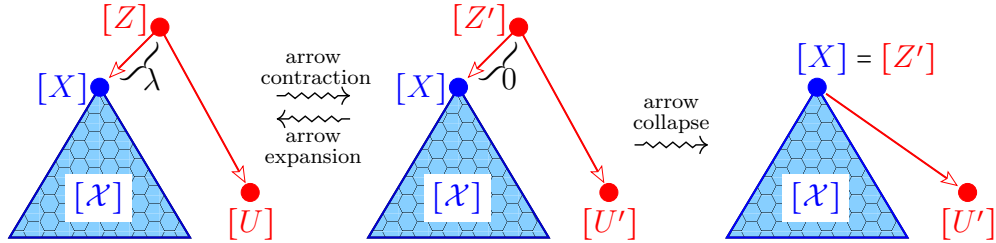

  \begin{tabular}{ccccc}
    \begin{lpic}[clean,r(-5mm)]{adjusted(0.4)}
      \lbl[W]{32,14;\textcolor{blue}{$[\Xcal]$}}
      \lbl[r]{27,50;\textcolor{blue}{$[X]$}}
      \lbl[r]{47,73;\textcolor{red}{$[Z]$}}
      \lbl[bl]{36,55,-45;$\left.\rule{0mm}{4.5mm}\right\}$}
      \lbl[tl]{45,57;$\lambda$}
      \lbl[t]{81,10.5;\textcolor{red}{$[U]$}}
    \end{lpic}
    &
    \begin{cd}[row sep=0mm]
     \mbox{}
     \arrow[rightsquigarrow]{r}{%
       \parbox{14mm}{\tiny\center arrow\\[-0.6ex] contraction\\[0.5ex]}}
     \&
     \mbox{}
     \\
     \mbox{}
     \&
     \arrow[rightsquigarrow]{l}{%
       \parbox{14mm}{\tiny\center arrow\\[-0.6ex] expansion\\[0.5ex]}}
     \\
     \rule{0mm}{30mm}
     \&
     \mbox{}
    \end{cd}
    &
    \begin{lpic}[draft,clean,l(-5.1mm),r(-5.1mm)]{adjusted(0.4)}
      \lbl[W]{32,14;\textcolor{blue}{$[\Xcal]$}}
      \lbl[r]{27,50;\textcolor{blue}{$[X]$}}
      \lbl[r]{47,73;\textcolor{red}{$[Z']$}}
      \lbl[bl]{36,55,-45;$\left.\rule{0mm}{4.5mm}\right\}$}
      \lbl[tl]{45,57;$0$}
      \lbl[t]{81,10.5;\textcolor{red}{$[U']$}}
    \end{lpic}
    &
    \begin{cd}[row sep=20mm]
     \mbox{}
     \arrow[rightsquigarrow]{r}{%
       \parbox{12mm}{\tiny\center arrow\\[-0.6ex] collapse\\[0.5ex]}}
     \&
     \mbox{}
     \\
     \mbox{}
    \end{cd}
    &
    \begin{lpic}[clean,l(-5mm)]{adjusted1(0.4)}
      \lbl[W]{32,14;\textcolor{blue}{$[\Xcal]$}}
      \lbl[b]{47,55;$\textcolor{blue}{[X]}=\textcolor{red}{[Z']}$}
      \lbl[t]{81,10.5;\textcolor{red}{$[U']$}}
    \end{lpic}
  \end{tabular}
  \caption{Arrow contraction/expansion and arrow collapse. 
    \emph{Here $[\Xcal]:=\lc X\rc$. In
      the left diagram the fan $[X]\ot{} [Z]\to{}[U]$ is
      admissible. In the middle diagram  $[X]\ot{} [Z']\to{}[U']$ is
      admissible and reduced. The diagrams may contain some
      other spaces beyond those shown. During contraction/expansion we
      don't have control over the other parts of the diagram.}}
  \label{fig:lowering}
\end{figure}

If an arrow $Y\to{}X$ in a diagram $\Zcal$ is an isomorphism, then we
can identify the spaces $X$ and $Y$, thus changing the combinatorial
structure of $\Zcal$. We call such change \term{arrow collapse}.
Examples of the process of collapsing an arrow can be seen in
Figures~\ref{fig:lowering},~\ref{fig:lowering-in-L2}
and~\ref{fig:lowering-in-L3}.  

\subsection{Arrow contraction and expansion.}
Suppose $[\Zcal]$ and $[\Zcal']$ are two tropical $\Gbf$-diagrams, containing admissible fans $([X]\ot{}[Z]\to{}[U])$
and $([X']\ot{}[Z']\to{}[U'])$, respectively, which correspond to each
other under the combinatorial isomorphism between $[\Zcal]$ and
$[\Zcal']$. Suppose the fan $([X']\ot{}[Z']\to{}[U'])$ is
reduced. Denote $[\Xcal]:=\lc X\rc$ and $[\Xcal']:=\lc X'\rc$. 
 
We say that
  \emph{$[\Zcal']$ is obtained from $[\Zcal]$ by arrow contraction}
  or, alternatively,
\emph{
  $[\Zcal]$ is obtained from $[\Zcal']$ by arrow expansion},
if 
\begin{align*}
  \tageq{lowering-equality}
  [\Xcal]&=[\Xcal']\\
  \tageq{lowering-cond-equality}
  [\Xcal\rel U]&=[\Xcal'\rel U']
\end{align*}

The other spaces in $[\Zcal]$ outside of $[\Xcal]$ and $[U]$ may
change in an uncontrolled manner.  
In view of equality~(\ref{eq:lowering-equality}) we will identify
diagrams $[\Xcal]$ and $[\Xcal']$.

Arrow contraction, expansion and collapse are illustrated in
Figures~\ref{fig:lowering}, \ref{fig:lowering-in-L2} and
\ref{fig:lowering-in-L3}.

Note that equation~(\ref{eq:lowering-cond-equality}) is in general an equality
in an infinite-dimensional space. But as a simple
consequence we have that for any two spaces $[X_1]$ and $[X_2]$ in
$[\Xcal]$ and the corresponding spaces $[X'_{1}]$ and $[X'_{2}]$ in
$[\Xcal']$ the following equalities hold:
\begin{align*}
  [X_1\rel U]&=[X'_1\rel U']\\
  [X_1:X_2\rel U]&=[X'_1:X'_2\rel U']\\
  [X_1:U]&=[X'_1:U']\tageq{entropy-preservation}\\
  [X_1:U\rel X_2]&=[X'_1:U'\rel X'_2]\\
  [U']&=[X:U]
\end{align*}
Indeed, the first equality follows directly from
equality~(\ref{eq:lowering-cond-equality}). The next three can be
  proven by expanding the right- and left-hand sides into summands of
  the form $[A\rel U]$ and $[B]$ for some $[A]$ and $[B]$ in
  $[\Xcal]$. The last one follows from the fact that $[U']$ is a
  descendant of $[X]$ in $[\Zcal']$ and therefore $[X\rel U']=[X]-[U']$.

We expect that arrow contraction is possible for any diagram with an
admissible fan, but for the purposes of this article the following
approximate contraction result from~\cite{Matveev-Arrow-2019}
suffices.

\begin{theorem}{p:contraction}
  Let $([X]\ot{}[Z]\to{}[U])$ be an admissible fan in some tropical
  $\Gbf$-diagram $[\Zcal]$. Then for every $\epsilon>0$ there exists a
  $\Gbf$-diagram $[\Zcal']$ containing an admissible fan
  $([X']\ot{}[Z']\to{}[U'])$, corresponding to the original admissible
  fan through the combinatorial isomorphism, such that, with the
  notations $\Xcal=\lc X\rc$ and $\Xcal'=\lc X'\rc$, the diagram
  $[\Zcal']$ satisfies
  \begin{enumerate}\def\theenumi{\roman{enumi}}
  \item 
    $\aikd([\Xcal'\rel U'],[\Xcal\rel U])\leq\epsilon$
  \item
    $\aikd(\Xcal',\Xcal)\leq \epsilon$
  \item
    $[Z'\rel X']\leq \epsilon$
  \end{enumerate}
\end{theorem}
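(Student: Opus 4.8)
The plan is to reduce to homogeneous diagrams and then perform an explicit group-theoretic construction, using Theorem~\ref{p:aep-tropical} to pass between the tropical and the homogeneous pictures and the $1$-homogeneity of $\aikd$ to rescale the errors. First I would invoke Theorem~\ref{p:aep-tropical} to approximate $[\Zcal]$, in $\aikd$, by a quasi-linear sequence of homogeneous minimal $\Gbf$-diagrams; such a diagram is modelled on a diagram of groups $(G; H_i : i\in\Gbf)$, with $X = G/H_X$, $U = G/H_U$ and apex $Z = G/(H_X\cap H_U)$, since minimality forces the apex subgroup to be the intersection $H_X\cap H_U$. In this language the entropies are logarithms of indices, so that $[Z\rel X]=\log[H_X:H_X\cap H_U]$ is exactly the quantity condition (iii) asks us to make small, and the fan is reduced precisely when $H_X\subseteq H_U$. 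The targets are transparent as well: one computes $[X:U]=\log\big(|G|/|H_XH_U|\big)$, so the contracted terminal space should satisfy $[U']=[X:U]$ and $[X'\rel U']=[X\rel U]$, consistent with $U'$ becoming an honest quotient of $X'$.

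Next I would build $[\Zcal']$ so that $U'$ is a genuine quotient of $X'$; then the minimal common ancestor of $X'$ and $U'$ is $X'$ itself, the apex $Z'$ coincides with $X'$, and condition (iii) holds exactly. Keeping the ideal $\lc X\rc$ fixed takes care of (ii), while (i) requires that conditioning the new ideal on $U'$ reproduce the fibre diagram $\Ccal := [\Xcal\rel U]$, which in the homogeneous model is the diagram of the group $H_U$ with subgroups $H_U\cap H_i$. Concretely, I would assemble $[\Xcal']$ as a bundle over the base $[U']$ of entropy $[X:U]$ whose fibres are copies of $\Ccal$, glued so that the unconditioned marginal agrees with $[\Xcal]$; conditions (i) and (ii) would then be verified by the entropy bookkeeping already recorded in (\ref{eq:entropy-preservation}), and the Lipschitz continuity of conditioning quoted above would propagate the homogeneous approximation of the first step back to $[\Zcal]$, turning the exact equalities into the $\epsilon$-estimates (i)--(iii).

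The main obstacle is that the clean base $[U']$ need not exist as an honest quotient: $H_XH_U$ is a subgroup only in favourable situations (for instance abelian diagrams), and in general there is no common-information space realizing $[X:U]$ while simultaneously preserving the conditioned fibres $\Ccal$ --- this is the familiar gap between mutual and common information. I expect this to be where the real work lies, and the remedy is asymptotic: passing to high tensor powers $\Xcal^{\otimes n}$ and invoking the asymptotic equipartition property once more, the typical atoms of $U^{\otimes n}$ organize into $\approx 2^{n[X:U]}$ groups, and one can build an approximate bundle over a base of this size whose fibres are asymptotically $\Ccal^{\otimes n}$, with gluing errors that are $o(n)$ and hence vanish after the $\tfrac1n$ tropical normalization. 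Carrying this out while simultaneously controlling the marginal ideal (ii) and, through the Lipschitz bound, the conditioned ideal (i), and forcing exact reducedness (iii), is precisely the involved construction of \cite{Matveev-Arrow-2019}; here one only needs its approximate output, so the $\epsilon$ in the statement absorbs all of the above typicality and continuity errors, and completeness of $\prob[\Gbf]$ guarantees that the resulting sequence defines a bona fide tropical diagram $[\Zcal']$.
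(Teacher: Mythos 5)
You should first be aware that this paper contains no proof of Theorem~\ref{p:contraction} at all: the authors state explicitly that it is an ``approximate contraction result from~\cite{Matveev-Arrow-2019}'' and import it as a black box, so there is no in-paper argument to compare yours against. Judged on its own, your reduction to the homogeneous setting is sound and matches the general philosophy of the cited work: the appeal to Theorem~\ref{p:aep-tropical}, the group model $(G;H_i:i\in\Gbf)$ with $Z=G/(H_X\cap H_U)$, the identities $[Z\rel X]=\log[H_X:H_X\cap H_U]$ and $[X:U]=\log\bigl(|G|/|H_XH_U|\bigr)$, the observation that reducedness means $H_X\subseteq H_U$, and the identification of the common-information obstruction (that $H_XH_U$ is generally not a subgroup, so no space realizing $[X:U]$ with the right fibres exists exactly) are all correct and correctly located.

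The genuine gap is that at the decisive step you do not give an argument: you assert that ``one can build an approximate bundle'' over a base of size $\approx 2^{n[X:U]}$ with fibres asymptotically $\Ccal^{\otimes n}$, with $o(n)$ gluing error, and then state that carrying this out ``is precisely the involved construction of \cite{Matveev-Arrow-2019}.'' That construction \emph{is} the theorem; deferring it to the reference whose result you are supposed to prove makes the attempt circular rather than a proof. Nothing in your sketch explains why a consistent assignment $x\mapsto u'(x)$ on (typical atoms of) $X^{\otimes n}$ exists that simultaneously (a) is nearly deterministic given $X'$, yielding (iii), (b) perturbs the whole ideal $\lc X\rc$ by at most $\epsilon$ in $\aikd$, yielding (ii), and (c) reproduces the conditioned ideal $[\Xcal\rel U]$ up to $\epsilon$, yielding (i); this requires a covering/random-selection argument on the homogeneous model (or a comparable combinatorial design), and all of the real work lives there. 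A smaller inaccuracy in the same sentence: the typical atoms of $U^{\otimes n}$ number $\approx 2^{n[U]}$, not $2^{n[X:U]}$; it is the newly built base $U'$, a quotient of $X^{\otimes n}$, that must have $\approx 2^{n[X:U]}$ atoms, so as written the sentence conflates the object being discarded with the object being constructed.
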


The evaluation of entropy of an individual space in a tropical diagram
is a 1-Lipschitz linear functional, while the operation of
conditioning is also a Lipschitz map, see
\cite{Matveev-Conditioning-2019}. Thus in the settings of
Theorem~\ref{p:contraction} the following inequalities hold: for any
two spaces $[X_1]$ and $[X_2]$ in $[\Xcal]$ and corresponding spaces
$[X'_{1}]$ and $[X'_{2}]$ in $[\Xcal']$ the following inequalities
hold:
\begin{align*}
  &\Big|[X_1\rel U]-[X'_1\rel U']\Big|
   \leq 
   \epsilon\\
  &\Big|[X_1:X_2\rel U]-[X'_1:X'_2\rel U']\Big|
   \leq 
   2\epsilon\\
  &\Big|[X_1:U]-[X'_1:U']\Big|
   \leq 
   2\epsilon
   \tageq{entropy-preservation-epsilon}\\
  &\Big|[X_1:U\rel X_2]-[X'_1:U'\rel X'_2]\Big|
   \leq 3\epsilon\\
  &\Big|[U']-[X:U]\Big|\leq\epsilon
\end{align*}

The following much simpler theorem from~\cite{Matveev-Arrow-2019} is
the reverse of Theorem~\ref{p:contraction}.

\begin{theorem}{p:lifting}\emph{(Expansion)}
  Given a reduced admissible sub-fan $([X]\ot{}[Z']\to{}[U'])$ in a tropical
  $\Gbf$-diagram $\Zcal'$  and a non-negative
  number $\lambda\geq0$, there is another $\Gbf$-diagram $[\Zcal]$
  obtained from $\Zcal$ by arrow expansion such that $[Z\rel X]=\lambda$.
\end{theorem}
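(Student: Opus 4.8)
The plan is to produce the expansion by a fully explicit construction at the level of representatives, tensoring the part of the diagram lying \emph{outside} the ideal $\lc X\rc$ with a fresh independent space of entropy $\lambda$, and then to read off the three defining equalities of arrow expansion directly. The guiding picture is that arrow contraction kills the extra randomness sitting above $X$ (driving $[Z\rel X]$ to $0$), so its inverse should simply reintroduce that randomness as an independent factor $W$ with $[W]=\lambda$, visible at the apex $Z$ and at the terminal foot $U$ but not inside $\lc X\rc$.

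First I would fix a quasi-linear representative $(\Zcal'(n))$ of $[\Zcal']$ together with a sequence of probability spaces $W(n)$ that is independent of everything in $\Zcal'(n)$ and satisfies $\tfrac1n\ent(W(n))\to\lambda$ (for instance $W(n)$ uniform on $\lceil e^{\lambda n}\rceil$ atoms). The key combinatorial observation is that the ideal $\lc X\rc$ is closed under passing to descendants, so in $\Gbf$ no morphism runs from the ideal into its complement. I would then define $\Zcal(n)$ by leaving every space of $\lc X\rc$ untouched and replacing every space $V$ in the complement by $V\otimes W(n)$; reductions internal to the ideal are kept, reductions internal to the complement are tensored with $\id_{W(n)}$, and each reduction from the complement into the ideal (such as the leg $Z'\to X$) is post-composed with the projection forgetting $W(n)$. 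Because the ideal is downward closed there are no reductions in the remaining direction, so the outcome is again a commutative $\Gbf$-diagram.

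Next I would verify the three properties. Since the ideal is literally unchanged, $[\Xcal]=[\Xcal']$ is immediate. For the apex, reducedness gives $[Z']=[X]$, hence $[Z]=[Z']+\lambda=[X]+\lambda$ and $[Z\rel X]=[Z]-[X]=\lambda$, as required. For the conditional equality $[\Xcal\rel U]=[\Xcal'\rel U']$, note that $U=U'\otimes W$ with $W\indep(\Xcal,U')$; conditioning $\Xcal$ on a pair $(u',w)$ therefore agrees with conditioning on $u'$ alone, so $\Xcal\rel U\cong\Xcal\rel U'=\Xcal'\rel U'$, and in particular the two conditioned tropical diagrams coincide. Finally, to see that the new sub-fan $(X\ot Z\to U)$ is again admissible I would check minimality: reducedness forces $U'$ to be a reduction of $X$ (the join of $X$ and $U'$ is $X$), so the two legs of the new fan jointly recover $Z'\cong X$ from the $X$-leg and $W$ from the $U$-leg, which together generate $Z=Z'\otimes W$; and $U$ is terminal because $U'$ was.

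The only point requiring genuine care is the passage from a single representative to the tropical statement: I would check that $(\Zcal(n))$ is quasi-linear, which holds because tensoring by an independent sequence merely adds entropies and so preserves the quasi-linearity estimate, and that all operations involved ($\otimes$, conditioning, $\ent_*$) are $\aikd$-continuous, so the exact equalities survive in the limit $n\to\infty$. A cleaner alternative route is to invoke Theorem~\ref{p:aep-tropical} first, replacing $[\Zcal']$ by a homogeneous diagram arising from a diagram of groups $(G;H_i)$; then the entire construction is the single substitution $G\mapsto G\times K$ with $|K|\approx e^{\lambda}$, setting $\tilde H_i=H_i\times K$ on the ideal and $\tilde H_i=H_i\times\{e\}$ on its complement, and all four verifications above collapse to one-line computations with orders of subgroups. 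Either way the essential difficulty is bookkeeping---pinning down exactly which spaces lie in the complement of $\lc X\rc$---rather than anything analytic, in keeping with this being the easy converse of Theorem~\ref{p:contraction}.
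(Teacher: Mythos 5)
Your proposal cannot be checked against an in-paper argument, because the paper does not actually prove this theorem: it is imported verbatim from the companion paper \cite{Matveev-Arrow-2019} (``The following much simpler theorem from \cite{Matveev-Arrow-2019} is the reverse of Theorem~\ref{p:contraction}''). Judged on its own merits, your main construction is correct and is the natural one for this framework: keep the ideal $\lc X\rc$ untouched, tensor every space outside the ideal with an independent space of asymptotic entropy $\lambda$, and use the fact that an ideal is closed under passing to descendants, so no arrow of $\Gbf$ leaves it --- which is exactly what makes the modified assignment a commutative $\Gbf$-diagram. Your verifications are sound: $[\Xcal]=[\Xcal']$ because the ideal is literally unchanged; $[\Xcal\rel U]=[\Xcal'\rel U']$ because conditioning on an independent extra factor changes nothing; the new fan is minimal because $(z',w)\mapsto(\chi(z'),\upsilon(z'),w)$ is injective (using that $Z'\to X$ is an isomorphism); and $[Z\rel X]=[Z']+\lambda-[X]=\lambda$ by reducedness.

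Two caveats. First, the existence and quasi-linearity of the sequence $W(n)$, uniform on $\lceil e^{\lambda n}\rceil$ atoms, is not completely free: controlling $\aikd\big(W(m+n),W(m)\otimes W(n)\big)$ is part of the statement that $\ent$ identifies $\prob[\Lambdabf_1]$ with $\Rbb_{\geq0}$, which the paper quotes from \cite{Matveev-Tropical-2019}; you are entitled to lean on that result, but it should be invoked explicitly rather than treated as obvious. Second, your ``cleaner alternative'' via homogeneous diagrams is not as clean as claimed: Theorem~\ref{p:aep-tropical} only gives an \emph{approximation} of $[\Zcal']$ by homogeneous diagrams, so reducedness of the fan and the two defining equalities of arrow expansion would hold only up to $\epsilon$, and recovering the exact equalities demanded by the theorem would require an extra limiting argument using completeness of $\prob[\Gbf]$ and continuity of conditioning; moreover the auxiliary group $K$ must have order roughly $e^{\lambda n}$ at stage $n$, not $e^{\lambda}$. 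Since that route is offered only as an aside, these slips do not affect the correctness of your primary argument.
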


\subsection{Examples}
To illustrate the discussion above, we consider two examples of
admissible sub-fans and arrow contraction and expansion.

\subsubsection{}
As a first example, suppose we are given a tropical two-fan
$[\Zcal]=([X]\ot{}[Z]\to{}[U])$ as in Figure
\ref{fig:lowering-in-L2}. We may ask the following question:
\begin{quote}\em 
  {Can the mutual information between
  $[X]$ and $[U]$ be captured by a tropical space $[V]$? More
  precisely, is there a diamond extension
  \[
  \begin{cd}[row sep=-2mm,column sep=small]
    \mbox{}
    \&{}
    [Z]
    \arrow{dl}
    \arrow{dr}
    \&
    \mbox{}
    \\{}
    [X]
    \arrow{dr}
    \&
    \mbox{}
    \&{}
    [U]
    \arrow{dl}
    \\
    \mbox{}
    \&{}
    [V]
    \&
    \mbox{}
  \end{cd}
  \]
  such that $[V]=[X:U]$ (or equivalently $[X:U\rel V]=0)$?}
\end{quote}
The answer is, in general, no. However, by contracting and collapsing
the arrow $[Z]\to{}[X]$ we can still obtain a reduction
$([X]\to{}[V])$, where $[V]$ has the required ``size'', i.e.~its
entropy equals the mutual information between $[X]$ and $[U]$. If we
want to, we can still keep the spaces $[Z]$ and $[U]$ in the diagram
after contraction/collapse. Note, however, that in general there will
be no reduction $[U]\to{}[V]$ commuting with the other reductions.
\begin{figure}[h]
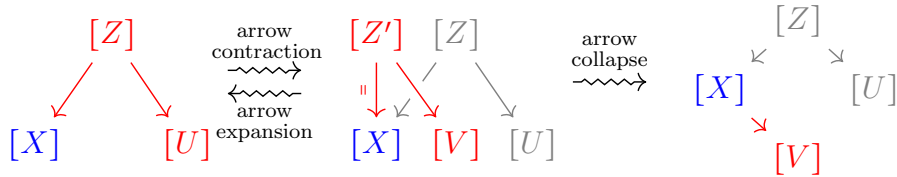

  \begin{tabular}{ccccc}
    \begin{cd}[column sep=0mm]
      \mbox{}
      \&{}
      \textcolor{red}{[Z]}
      \arrow[color=red]{dl}[above]{}
      \arrow[color=red]{dr}
      \&{}
      \\{}
      \textcolor{blue}{[X]}
      \&{}
      \&{}
      \textcolor{red}{[U]}
    \end{cd}
    &
    \hspace{-2em}
    \begin{cd}[row sep=0mm]
    \mbox{}
    \arrow[rightsquigarrow]{r}{%
      \parbox{14mm}{\tiny\center arrow\\[-0.6ex] contraction\\[0.5ex]}}
    \&
    \mbox{}
    \\
    \mbox{}
    \&
    \mbox{}
    \arrow[rightsquigarrow]{l}{%
      \parbox{14mm}{\tiny\center arrow\\[-0.6ex] expansion\\[0.5ex]}}
    \end{cd}
    \hspace{-1em}
    &
    \begin{cd}[column sep=0mm]
      \textcolor{red}{[Z']}
      \arrow[color=red]{d}[left]{\rotatebox{90}{$=$}}
      \&{}
      \textcolor{gray}{[Z]}
      \arrow[color=gray]{dl}
      \arrow[color=gray]{dr}
      \&{}
      \\{}
      \textcolor{blue}{[X]}
      \&{}
      \textcolor{red}{[V]}
      \arrow[from=ul,color=red,crossing over]{}
      \&{}
      \textcolor{gray}{[U]}
    \end{cd}
    &
    \hspace{-2em}
    \begin{cd}
    \mbox{}
    \arrow[rightsquigarrow]{r}{%
      \parbox{14mm}{\tiny\center arrow\\[-0.6ex] collapse\\[0.5ex]}}
    \&
    \mbox{}
    \end{cd}
    \hspace{-1em}
    &
    \begin{cd}[column sep=0mm,row sep=tiny]
      \mbox{}
      \&{}
      \textcolor{gray}{[Z]}
      \arrow[color=gray]{dl}
      \arrow[color=gray]{dr}
      \&{}
      \\{}
      \textcolor{blue}{[X]}
      \arrow[color=red]{dr}
      \&{}
      \&{}
      \textcolor{gray}{[U]}
      \\{}
      \&
      \textcolor{red}{[V]}
    \end{cd}    
  \end{tabular}
\caption{Contraction/expansion and arrow collapse in a two-fan}
\label{fig:lowering-in-L2}
\end{figure}

\subsubsection{}
As a second example, consider the tropical $\Lambdabf_{3}$-diagram
\[
  [\Zcal]=\<[X_{1}],[X_{2}],[U]\>
\]
shown in Figure \ref{fig:lowering-in-L3}.
Such examples can be particularly useful when the space $[U]$ is
chosen to satisfy additional properties. For instance, it could be
chosen such that the diagrams $[X_1]$ and $[X_2]$ are independent
conditioned on $[U]$. We will discuss such extensions elsewhere. 
The fan $([X]\ot{}[Z]\to{}[U])$ is admissible and the
ideal $\lc X\rc$ is the fan  $([X_{1}]\ot{}[X]\to{}[X_{2}])$.

If we contract $[Z]\to{}[U]$, we obtain a diagram with a new space
$[V]$, that has the same properties relative to $\lc X\rc=\<[X_1],
[X_2]\>$. The arrow expansion can be seen by reading the
picture backwards.

\begin{figure}
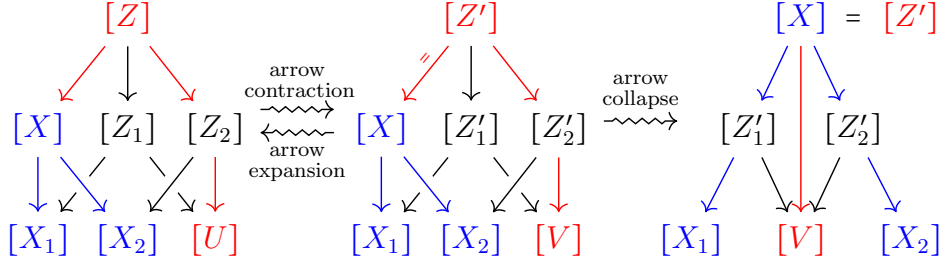

  \begin{tabular}{ccccc}
    \begin{cd}[column sep=0mm,row sep=normal]
      \mbox{}
      \&{}
      \textcolor{red}{[Z]}
      \arrow[color=red]{dl}
      \arrow{d}
      \arrow[color=red]{dr}
      \&{}
      \\{}
      \textcolor{blue}{[X]}
      \arrow[color=blue]{d}
      \&{}
      [Z_{1}]
      \arrow{dr}
      \arrow{dl}
      \&{}
      [Z_{2}]
      \arrow[color=red]{d}
      \\{}
      \textcolor{blue}{[X_{1}]}
      \&{}
      \textcolor{blue}{[X_{2}]}
      \arrow[from=ul, color=blue,crossing over]
      \arrow[from=ur, crossing over]
      \&{}
      \textcolor{red}{[U]}
    \end{cd}      
    &
    \hspace{-2em}
    \begin{cd}[row sep=0mm]
    \mbox{}
    \arrow[rightsquigarrow]{r}{%
            \parbox{14mm}{\tiny\center arrow\\[-0.6ex] contraction\\[0.5ex]}}
    \&
    \mbox{}
    \\
    \mbox{}
    \&
    \mbox{}
    \arrow[rightsquigarrow]{l}{%
      \parbox{14mm}{\tiny\center arrow\\[-0.6ex] expansion\\[0.5ex]}}
    \end{cd}
    \hspace{-2em}
    &
    \begin{cd}[column sep=0mm]
      \mbox{}
      \&{}
      \textcolor{red}{[Z']}
      \arrow[color=red]{dl}[above]{\rotatebox{50}{$=$}}
      \arrow{d}
      \arrow[color=red]{dr}
      \&{}
      \\{}
      \textcolor{blue}{[X]}
      \arrow[color=blue]{d}
      \&{}
      [Z'_{1}]
      \arrow{dr}
      \arrow{dl}
      \&{}
      [Z'_{2}]
      \arrow[color=red]{d}
      \\{}
      \textcolor{blue}{[X_{1}]}
      \&{}
      \textcolor{blue}{[X_{2}]}
      \arrow[from=ul, color=blue,crossing over]
      \arrow[from=ur, crossing over]
      \&{}
      \textcolor{red}{[V]}
    \end{cd}      
    &
    \hspace{-2em}
    \begin{cd}
    \mbox{}
    \arrow[rightsquigarrow]{r}{%
            \parbox{14mm}{\tiny\center arrow\\[-0.6ex] collapse\\[0.5ex]}}
    \&
    \mbox{}
    \end{cd}
    \hspace{-1em}
    &
    \hspace{-10mm}
    \begin{cd}[column sep=-4mm]
      \mbox{}
      \&{}
      \&{}
      \textcolor{blue}{[X]}
      \arrow[color=blue]{dl}
      \arrow[color=red]{dd}
      \arrow[color=blue]{dr}
      \&
      =
      \&{}
      \textcolor{red}{[Z']}
      \\{}
      \&{}
      [Z'_{1}]
      \arrow[color=blue]{dl}
      \arrow{dr}
      \&{}
      \&{}
      [Z'_{2}]
      \arrow[color=blue]{dr}
      \arrow{dl}
      \&{}
      \\{}
      \textcolor{blue}{[X_{1}]}
      \&{}
      \&{}
      \textcolor{red}{[V]}
      \&{}
      \&{}
      \textcolor{blue}{[X_{2}]}
    \end{cd}         
  \end{tabular}
  \caption{Arrow contraction and expansion in a $\Lambdabf_{3}$-diagram}
  \label{fig:lowering-in-L3}
\end{figure}

\section{Entropic Cone}
In this section we define the submodular, entropic and Abelian cones
associated to an indexing category $\Gbf$. 

\subsection{Vector-spaces $\Rbb^{\Gbf}$ and $\Rbb\otimes\Gbf$}
Given an indexing category $\Gbf$ we consider two linear
spaces associated to it. 
Recall that the vector space $\Rbb^{\Gbf}$
is the space of functions from $\Gbf$ to $\Rbb$. 
The second space, dual to the first one, is $\Rbb\otimes\Gbf$ -- the
vector-space of formal finite linear combinations of objects in $\Gbf$
with real coefficients. These two vector spaces are in natural duality
defined by
\[
  \text{For }
  f\in\Rbb^{\Gbf}
  \text{ and }
  \sum_{i\in\Gbf}\lambda_{i} \otimes i\in\Rbb\otimes\Gbf,
  \quad
  \<f,\sum_{i\in\Gbf}\lambda_{i}\otimes i\>
  :=
  \sum_{i\in\Gbf}\lambda_{i}f(i)
\]
The collection of vectors 
$\set{1 \otimes i}_{i \in \Gbf}=: 
\set{[i]}_{i \in \Gbf}$ 
forms a
basis of the space $\Rbb \otimes \Gbf$, and we denote the dual basis
in $\Rbb^{\Gbf}$ by $\set{f_i}_{i \in \Gbf}$. 
We also consider the following special vectors in $\Rbb\otimes\Gbf$:
\begin{itemize}
\item 
  The basis vectors $[i]:= 1\otimes i$.
\item 
  $[i|j]:=[\hat\imath]-[j]$, where $\hat\imath$ is
    the top object in a minimal fan $i\ot\hat\imath\to j$ in $\Gbf$.
\item 
  $[i:j]:=[i]+[j]-[\hat\imath]$, where $\hat\imath$ is top object in a
  minimal fan $i\ot \hat\imath\to j$ in $\Gbf$.
\item 
  $[i:j|k]:=[\hat\imath]+[\hat\jmath]-[k]-[l]$ where objects
  $\hat\imath$, $\hat\jmath$ and $[l]$ are included in the following
  minimal diagram in $\Gbf$
  \[
  \begin{cd}[row sep=-3mm,column sep=small]
    \mbox{} \& 
    \mbox{} \&  
    l
    \arrow{dl}
    \arrow{dr}
    \\
    \mbox{} \&
    \hat\imath
    \arrow{dl}
    \arrow{dr}
    \&
    \mbox{}
    \&
    \mbox{}
    \hat\jmath
    \arrow{dl}
    \arrow{dr}
    \\
    i
    \&\&
    k
    \&\&
    j
  \end{cd}
  \]
\end{itemize}
This notations are consistent with the notations for entropy and
mutual information, which we introduced in Section~\ref{s:entropy}, in
the sense that for a $\Gbf$-diagram $\Xcal=\set{X_{i};\chi_{ij}}$
holds
\begin{align*}
  \<\ent_{*}\Xcal\,,\, [i]\>&=[X_{i}]\\
  \<\ent_{*}\Xcal\,,\, [i|j]\>&=[X_{i}|X_{j}]\\
  \<\ent_{*}\Xcal\,,\, [i:j]\>&=[X_{i}:X_{j}]\\
  \<\ent_{*}\Xcal\,,\, [i:j|k]\>&=[X_{i}:X_{j}|X_{k}]\\
\end{align*}

\subsection{Submodular, entropic and Abelian cones}
Let $\Gbf=\set{i;\gamma_{ij}}$ be an indexing category. 
We define three closed, convex cones in $\Rbb^{\Gbf}$: the submodular cone
$\smc(\Gbf)$, the entropic cone $\ec(\Gbf)$ and the Abelian cone
$\abc(\Gbf)$. 

\subsubsection{The submodular cone}

The submodular cone $\smc(\Gbf)\subset\Rbb^{\Gbf}$ consists of nonnegative,
non-decreasing, submodular functions on the set of objects in the category
(points in the poset or vertices in the DAG) $\Gbf$. In essence, these
are the functions on $\Gbf$ that satisfy Shannon-like inequalities.
More formally it is defined as follows.

The properties nonnegativity, monotonicity and submodularity are
defined through linear inequalities. Every linear inequality for $f
\in \Rbb^\Gbf$ can be written in the form $\langle f, v \rangle \geq 0$ for some $v \in \mathbb{R} \otimes \Gbf$. A function $f \in
\Rbb^\Gbf$ is called

\begin{itemize}
\item 
  \emph{positive}, if $\< f,\, [i] \> \geq 0$ for every object
  $i \in \Gbf$
\item 
  \emph{monotone}, if $\< f,\, [i|j] \> \geq 0$, for every
  $i,j\in\Gbf$. 
\item 
  \emph{submodular}, if
  $\< f,\, [i:j] \> \geq 0$ and $\< f\,,\, [i:j|k] \> \geq 0$ for every
  $i,j,k\in\Gbf$ 
\end{itemize}
The submodular cone is dual to the cone spanned by Shannon-like
inequalities
\[
  \smc(\Gbf)
  :=
  \set{f\in\Rbb^{\Gbf}
    \st 
    \<f,v\>\geq0 \text{ for all }
    v\in\SH
  }
\]
where $\SH:=\set{[i],\,[i|j],\,[i:j],\,[i:j|k]\st i,j,k\in\Gbf}$.

\subsubsection{The entropic cone}
The entropic cone consists of functions on $\Gbf$ that are realizable
as entropies of tropical $\Gbf$-diagrams of probability spaces,
i.e. it is the image under the entropy map $\ent_{*}$ of the tropical
cone of minimal diagrams indexed by $\Gbf$:
\[
  \ec(\Gbf)
  :=
  \ent_{*}(\prob[\Gbf]_{\msf})
\]
In view of the tropical AEP Theorem \ref{p:aep-tropical} one can
equivalently define 
\[
  \ec(\Gbf)
  :=
  \operatorname{Closure}\big(\ent_{*}(\prob[\Gbf]_{\msf, \hsf})\big)
\]
where by $\prob[\Gbf]_{\msf, \hsf}$ we mean the space of minimal,
homogeneous, tropical $\Gbf$-diagrams.  As we explained in Section
\ref{se:full}, when $\Gbf = \Lambdabf_n$, diagrams correspond to
$n$-tuples of random variables. In this case, the entropic cone is
equal to the closure of the set of entropically representable vectors,
i.e. vectors whose coordinates are entropies of the $n$ random
variables and their joints, see \cite{Yeung-Information-2008}.

\subsubsection{The Abelian cone}
The Abelian cone consists of entropy vectors of Abe\-lian tropical diagrams
\[
  \abc(\Gbf)
  :=
  \ent_{*}(\prob[\Gbf]_{\mathsf{Ab}, \msf})
\]
The following two inclusions follow from the definitions and the fact that
entropy satisfies Shannon inequalities.
\[\tageq{inclusions}
\smc(\Gbf)\supset\ec(\Gbf)\supset\abc(\Gbf)
\]

\subsubsection{The cases of $\Gbf=\Lambdabf_{1}$, $\Lambdabf_{2}$, and
  $\Lambdabf_{3}$}
In this cases all three cones coincide. Essentially it means that any
tuple of numbers, that satisfy Shannon inequalities can be realized as
entropies of Abelian diagrams, see~\cite{Zhang-Characterization-1998}.

\subsection{The case $\Gbf=\mathbf{\Lambda}_{4}$}
The Zhang-Yeung non-Shannon information inequality
(\cite{Zhang-Characterization-1998}) shows that the submodular cone
$\smc(\Lambdabf_4)$ is strictly larger than the entropic cone
$\ec(\Lambdabf_4)$. It is also known that $\ec(\Lambdabf_4)$ is
strictly larger than $\abc(\Lambdabf_4)$, see for
example~\cite{Matus-Infinitely-2007}.  Hence, both inclusions
in~(\ref{eq:inclusions}) are proper.

The cone $\smc(\Lambdabf_{4})$ is polyhedral by definition, and it is
known that the cone $\abc(\Lambdabf_{4})$ is polyhedral as well, see,
for example,~\cite{Dougherty-Non-Shannon-2011}. In contrast, the
entropic cone $\ec(\Lambdabf_4)$ is not polyhedral, as has been shown
by Mat\'u\v s in \cite{Matus-Infinitely-2007}.

There are many upper and lower bounds for $\ec(\Lambdabf_{4})$. The
upper bounds are in the form of linear inequalities, some of them
organized in infinite families. A large list can be found
in~\cite{Dougherty-Non-Shannon-2011}. Lower bounds are in the form of
points in the complement
$\ec(\Lambdabf_{4})\setminus\abc(\Lambdabf_{4})$.

Note that there is an action of symmetric group $S_{4}$ on $\Lambdabf_{4}$,
$\prob[\Lambdabf_{4}]$, $\smc(\Lambdabf_{4})$, $\ec(\Lambdabf_{4})$
and $\abc(\Lambdabf_{4})$. 

We will adopt Mat\'u\v s' notations, where an integer (in small bold
face) represents the set of its decimal digits (eg
$\2\4\oto\set{2,4}\in\Lambdabf_{4}$).

\subsubsection{Ingleton inequalities and the Abelian cone $\abc(\Lambdabf_{4})$}
In addition to the Shannon inequalities, Abelian diagrams also satisfy
six Ingleton inequalities, corresponding to the Ingleton vector
\[
  \ing(\1\2;\3\4)
  :=
  -[\1:\2]+[\1:\2|\3]+[\1:\2|\4]+[\3:\4]\in\Rbb\otimes\Lambdabf_{4}
\]
and five other vectors obtained by permuting the coordinates.

The cone $\abc(\Lambdabf_{4})$ is a polyhedral cone dual to the cone
spanned by $\SH$ and six Ingleton vectors. Its structure is
well-known: it coincides with the cone called $\Hbf^\square$ in
\cite{Matus-Conditional-1995}. It has 35 extremal rays, grouped into ten
$S_{4}$-orbits.

\subsubsection{The submodular cone $\smc(\Lambdabf_4)$}
We will represent vectors in $\Rbb^{\Lambdabf_{4}}$ by writing their
coordinates in the following order
\[
\coords([\1],[\2],[\3],[\4];
[\1\2],[\1\3],[\1\4],[\2\3],[\2\4],[\3\4];
[\1\2\3],[\1\2\4],[\1\3\4],[\2\3\4];
[\1\2\3\4])
\]

The cone $\smc(\Lambdabf_4)$ has 41 extremal rays, grouped into eleven
$S_4$-orbits: the 35 rays that are extremal for $\abc(\Lambdabf_{4})$
and six special rays in the $S_{4}$-orbit of a ray generated by the
vector
\[
   \spc(\1\2;\3\4):=\coords(2,2,2,2;\;3,3,3,3,3,4;\;4,4,4,4;\;4)
\]

Note that $\<\spc(\1\2;\3\4)\,,\,\ing(\1\2;\3\4)\>=-1$. It is known that
$\spc(\1\2;\3\4)$ and the other special vectors are not in
$\ec(\Lambdabf_{4})$; they are neither representable as
entropy vectors of some diagram of probability spaces nor can they be
approximated by representable vectors.

\subsubsection{The non-Ingleton cone}
The closure of the complement
\[
  \smc(\Lambdabf_{4})\setminus\abc(\Lambdabf_{4})
\]
is the union of six cones with disjoint interiors, permuted by the
action of $S_{4}$. The stabilizer $D_{2}$ of this action is the
dihedral subgroup of $S_{4}$ preserving the partition
$\1\2\3\4=\1\2\cup\3\4$. It has order four and is isomorphic to
$\Zbb_{2}\times\Zbb_{2}$.

Consider one of these cones, containing $\spc(\1\2;\3\4)$ and denote
it by $\ning$. We will call it the non-Ingleton cone.  The cone
$\ning$ has a 14-dimensional simplex as a base. The vertices $a_1,
\dots, a_{15}$ and the dual faces $\alpha_1, \dots, \alpha_n$ of the
simplex are listed in Table~\ref{tbl:non-ingleton}.

\begin{table}
\[
\begin{array}{l|l|l|c}
  \text{Vertex}
  &
  \text{Dual face}
  &
  \text{Representative}
  &
  \text{$D_{2}$-orbit}
  \\\hline
  a_{1}=\coords(1,0,0,0;\;1,1,1,0,0,0;\;1,1,1,0;\;1)
  &
  \alpha_{1}=[\1|\2\3\4]
    &
  \begin{array}{l}
    l_{2}\cdot\big(
    \Zbb_{2};
    \set{0},\\\quad\;\;
    \Zbb_{2},
    \Zbb_{2},
    \Zbb_{2}\big)
  \end{array}
  &
  a_1,a_2
  \\
  a_{3}=\coords(0,0,1,0;\;0,1,0,1,0,1;\;0,1,1,1;\;1)
  &
  \alpha_{3}=[\3|\1\2\4]
    &
  \begin{array}{l}
    l_{2}\cdot\big(
    \Zbb_{2};
    \Zbb_{2},\\\quad\;\;
    \set{0},
    \Zbb_{2},
    \Zbb_{2}\big)
  \end{array}
  &
  a_3,a_4
  \\
  a_{5}=\coords(1,1,0,0;\;1,1,1,1,1,0;\;1,1,1,1;\;1)
  &
  \alpha_{5}=[\1:\3|\2]
  &
  \begin{array}{l}
    l_{2}\cdot\big(
    \Zbb_{2};
    \set{0},\\\quad\;\;
    \set{0},
    \Zbb_{2},
    \Zbb_{2}\big)
  \end{array}
  &
  \begin{array}{l}
    a_{5},a_{6},\\a_{7},a_{8}
  \end{array}
  \\
  a_{9}=\coords(1,1,1,0;\;1,1,1,1,1,1;\;1,1,1,1;\;1)
  &
  \alpha_{9}=[\1:\2|\4]
  &
  \begin{array}{l}
    l_{2}\cdot\big(
    \Zbb_{2};
    \set{0},\\\quad\;\;
    \set{0},
    \set{0},
    \Zbb_{2}\big)
  \end{array}
  &
  a_{9},a_{10}
  \\
  a_{11}=\coords(1,1,1,1;\;1,1,1,1,1,1;\;1,1,1,1;\;1)
  &
  \alpha_{11}=[\3:\4]
  &
  \begin{array}{l}
    l_{2}\cdot\big(
    \Zbb_{2};
    \set{0},\\\quad\;\;
    \set{0},
    \set{0},
    \set{0}\big)
  \end{array}
  &
  a_{11}
  \\
  a_{12}=\coords(1,0,1,1;\;1,2,2,1,1,2;\;2,2,2,2;\;2)
  &
  \alpha_{12}=[\3:\4|\1]
  &
  \begin{array}{l}
    l_{2}\cdot\big(
    (\Zbb_{2})^{2};\;
   \<\chi_{1}\>,\\\quad\;\;
   \<\chi_{1},\chi_{2}\>,
   \<\chi_{2}\>,\\\quad\;\;
   \<\chi_{1}+\chi_{2}\>\big)
  \end{array}
  &
  a_{12},a_{13} 
  \\
  a_{14}=\coords(1,1,1,1;\;2,2,2,2,2,2;\;3,3,3,3;\;3)
  &
  \alpha_{14}=[\1:\2|\3\4]
  &
  \begin{array}{l}
    l_{3}\cdot\big(
    (\Zbb_{3})^{3};
    \<\chi_{1},\chi_{2}\>,\\\quad\;\;
    \<\chi_{2},\chi_{3}\>,
    \<\chi_{3},\chi_{1}\>,\\\quad\;\;
    \<\chi_{1}+\chi_{2},\chi_{2}+\chi_{3}\>
    \big)
  \end{array}
  &
  a_{14}
  \\
  a_{15}=\coords(2,2,2,2;\;3,3,3,3,3,4;\;4,4,4,4;\;4)
  &
  \!\!\!\begin{array}{l}
  \alpha_{15}=\\\,\,\,\,\,-\ing(\1\2;\3\4)
  \end{array}
  &
  \text{Not representable}
  &
  a_{15}
  \\\hline
\end{array}
\]
\caption{The vertices and faces of the base simplex of non-Ingleton
  cone.
  \emph{The dihedral group $D_{2}$ acts on the simplex by transposing
  $\1$ and $\2$ and, independently, $\3$ and $\4$, so we list
  only one representative in each orbit. To shorten notations we
  set $l_{2}=(\ln2)^{-1}$ and $l_{3}:=(\ln3)^{-1}$. By
  $(\Zbb_{n})^{k}$ we mean the direct product of $k$ copies of the
  cyclic group of order $n$ and $\chi_{1},\ldots,\chi_{k}$ stand for
  the standard generators in $(\Zbb_{n})^{k}$.}}
\label{tbl:non-ingleton}
\end{table}

The covectors $\alpha_1, \dots, \alpha_{15}$ give convex coordinates
in the simplex.

\subsubsection{The cone $\ec(\Lambdabf_{4})$}
The cone $\ec(\Lambdabf_{4})$ is squeezed between $\abc$ and $\smc$
and the whole picture is $S_{4}$-symmetric. Thus the ``unknown'' part
of the $\ec(\Lambdabf_{4})$ is the intersection
$\ec':=\ec(\Lambdabf_{4})\cap\ning$. It contains the rays spanned by
  vectors $a_{1},\ldots,a_{14}$ and therefore the whole face
  $\set{\alpha_{15}=0}$.  The remaining part of the boundary
  $\partial_{+}\ec'$ is what we are after.  From convexity of $\ec'$
  it follows that this part of the boundary is the graph of a certain
  function defined on the cone spanned by $a_{1},\ldots,a_{14}$
\[
  \partial_{+}\ec'=\set{\alpha_{15}=\Phi(\alpha_{1},\ldots,\alpha_{14})}
\]
where $\Phi$ is defined by
\[
  \Phi(x_{1},\ldots,x_{14})
  :=
  \sup\set{\alpha_{15}(\xbf)
           \st
           \big(\alpha_{1}(\xbf),\ldots,\alpha_{14}(\xbf)\big)
           =(x_{1},\ldots,x_{14}),\;
           \xbf\in\ec(\Lambdabf_{4})}
\]
Obviously, the function $\Phi$ is 1-homogeneous.

\begin{theorem}{p:dim-reduction}
  The function $\Phi$ does not depend on the first four arguments.
\end{theorem}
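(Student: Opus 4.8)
The plan is to show that the first four arguments of $\Phi$ --- the singleton conditional entropies $\alpha_1=[\1|\2\3\4]$, $\alpha_2=[\2|\1\3\4]$, $\alpha_3=[\3|\1\2\4]$, $\alpha_4=[\4|\1\2\3]$ --- can be moved freely by arrow contraction and expansion without disturbing any of the covectors $\alpha_5,\dots,\alpha_{15}$, in particular the Ingleton height $\alpha_{15}=-\ing(\1\2;\3\4)$. Since $\Phi$ is the supremum of $\alpha_{15}$ over $\ec(\Lambdabf_{4})$ subject to prescribed values of $\alpha_1,\dots,\alpha_{14}$, such moves force $\Phi$ to be constant along the first four coordinate directions, which is exactly the assertion.

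First I would set up the relevant fan. By Theorem~\ref{p:aep-tropical} represent a point of $\ec(\Lambdabf_{4})\cap\ning$ by a minimal homogeneous tropical diagram $\<[X_1],[X_2],[X_3],[X_4]\>$. Fix a singleton $m\in\{\1,\2,\3,\4\}$ and write $\overline{m}$ for its complement in $\{\1,\2,\3,\4\}$. The minimal common ancestor of $\overline{m}$ and $m$ in $\Lambdabf_{4}$ is $\1\2\3\4$, so
\[
  \big([X_{\overline{m}}]\ot{}[X_{1234}]\to{}[X_m]\big)
\]
is an admissible fan: its terminal foot $[X_m]$ is a singleton (hence terminal) and its minimality is inherited from the diagram. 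Here the arrow length is $[Z\rel X]=[X_{1234}\rel X_{\overline{m}}]=[m|\overline{m}]=\alpha_{(m)}$, one of the first four arguments, and the generated ideal is the full three-variable subdiagram $\lc X_{\overline{m}}\rc=\<[X_i]:i\in\overline{m}\>$.

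The computational heart is to verify that every covector other than $\alpha_{(m)}$ is controlled by the two invariants $\lc X_{\overline{m}}\rc$ and $\lc X_{\overline{m}}\rc\rel X_m$ of arrow contraction. Each such covector falls, for the given $m$, into one of three preserved types: (i) an entropic quantity of the subdiagram on $\overline{m}$ alone; (ii) an entropic quantity of the conditioned subdiagram $\lc X_{\overline{m}}\rc\rel X_m$; or (iii) one of the special forms $[A\rel U]$, $[A:B\rel U]$, $[A:U]$, $[A:U\rel B]$ with $A,B$ spaces of $\lc X_{\overline{m}}\rc$ and $U=X_m$, which are preserved by the equalities~(\ref{eq:entropy-preservation}). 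The only covectors needing attention are $\alpha_{14}=[\1:\2|\3\4]$ and $\alpha_{15}$, whose classification depends on $m$: when $m\in\{\3,\4\}$ the conditioning in $\alpha_{14}$ absorbs $X_m$ and it becomes a conditional mutual information inside $\lc X_{\overline{m}}\rc\rel X_m$ (type (ii)), whereas for $m\in\{\1,\2\}$ it is directly of the form $[A:U\rel B]$ (type (iii)); and $\alpha_{15}=[\1:\2]-[\1:\2|\3]-[\1:\2|\4]-[\3:\4]$ splits, for every $m$, into a sum of terms of types (i)--(iii). It suffices to carry out this bookkeeping for the two $D_2$-orbit representatives $m=\1$ and $m=\3$, the remaining cases following from the $D_2$-symmetry of $\ning$; the conclusion is that contraction/expansion on this fan alters $\alpha_{(m)}$ and nothing else among $\alpha_1,\dots,\alpha_{15}$.

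Finally I would convert invariance into the statement about $\Phi$, one coordinate at a time. On the locus $\{\alpha_{(m)}=0\}$ the reduction $X_{1234}\to X_{\overline{m}}$ is an isomorphism, so the fan is reduced and Theorem~\ref{p:lifting} raises $\alpha_{(m)}$ to any $\lambda\geq0$ exactly, leaving all other covectors fixed; taking suprema gives $\Phi(\lambda,\cdot)\geq\Phi(0,\cdot)$. Conversely, applying Theorem~\ref{p:contraction} to a near-optimal diagram with $\alpha_{(m)}=\lambda$ produces one with $\alpha_{(m)}\leq\epsilon$ in which all other base covectors and $\alpha_{15}$ are displaced by at most $O(\epsilon)$, as quantified by~(\ref{eq:entropy-preservation-epsilon}); since $\Phi$ is $1$-homogeneous and concave (being the upper boundary of the convex set $\ec'$), hence continuous on the interior, letting $\epsilon\to0$ yields $\Phi(0,\cdot)\geq\Phi(\lambda,\cdot)$. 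The two inequalities give $\Phi(\lambda,\cdot)=\Phi(0,\cdot)$ for all $\lambda\geq0$, and running the argument for $m=\1,\2,\3,\4$ in turn --- each step leaving the other three singleton conditional entropies untouched --- shows $\Phi$ is independent of its first four arguments. The main obstacle is the middle step: one must certify that all eleven covectors $\alpha_5,\dots,\alpha_{15}$, and especially the Ingleton height $\alpha_{15}$, genuinely decompose into quantities internal to $\lc X_{\overline{m}}\rc$ and its conditioning on $X_m$; once that bookkeeping is complete the expansion, contraction and limiting steps are routine.
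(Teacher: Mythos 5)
Your proposal is correct and follows essentially the same route as the paper's proof: the same admissible fans $([X_{\overline{m}}]\ot{}[X_{\1\2\3\4}]\to{}[X_{m}])$, Theorem~\ref{p:contraction} plus continuity of $\Phi$ in the limit $\epsilon\to0$ for the inequality $\Phi(x_{1},\ldots,x_{14})\leq\Phi(0,0,0,0,x_{5},\ldots,x_{14})$, and Theorem~\ref{p:lifting} for the reverse inequality. The only differences are cosmetic --- you contract and take the limit one singleton at a time where the paper performs all four contractions before a single limit, and you spell out the bookkeeping (that $\alpha_{5},\ldots,\alpha_{15}$ decompose into quantities internal to $\lc X_{\overline{m}}\rc$, its conditioning on $[X_{m}]$, and the forms in~(\ref{eq:entropy-preservation-epsilon})) that the paper compresses into its Lipschitz-constant claim.
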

\begin{proof}
For convenience, for a tropical $\Lambdabf_{4}$-diagram we write
\[
  A_{i}[\Xcal]:=\<\ent_{*}[\Xcal],\alpha_{i}\>
\]
e.g.  $A_{1}[\Xcal]=[X_{\1}|X_{\2\3\4}]$,
$A_{5}[\Xcal]=[X_{1}:X_{3}|X_{2}]$, etc.  Note that all $A_{i}$'s are
Lipschitz-continuous with respect to the input diagram with Lipschitz
constant at most $14$.
  
In terms of functionals $A_{i}$ the definition of the
function $\Phi$ can be rewritten as
\[
  \Phi(x_{1},\dots,x_{14})
  :=
  \sup\set{A_{15}[\Xcal] \st 
    A_{i}[\Xcal]=x_{i}\text{ for }
    1\leq i\leq 14;\;[\Xcal]\in\prob[\Lambdabf_{4}]_{\msf}}
\]

Consider a minimal tropical
$\Lambdabf_{4}$-diagram $[\Xcal]=\<[X_{\1}],[X_{\2}],[X_{\3}],[X_{\4}]\>$.
It contains an admissible sub-fan
$([X_{\2\3\4}]\ot{[X_{\1\2\3\4}]}\to{}[X_{\1}])$. 

Applying Theorem~\ref{p:contraction} to $[\Xcal]$ and parameter
$\epsilon>0$ we obtain another diagram $[\Xcal']$ such that
\begin{align*}
  &A_{1}[\Xcal']\leq\epsilon\\
  &\big|
    A_{i}[\Xcal']-A_{i}[\Xcal]
  \big|
  \leq 14\epsilon
  &&\text{for $i=2,\dots,15$}
\end{align*}

Repeatedly applying Theorem~\ref{p:contraction} to the resulting
diagram after circular permutation of terminal spaces we obtain a
$\Lambdabf_{4}$-diagram
\[
  [\Xcal'']=\<[X_{\1}''],[X''_{\2}],[X''_{\3}],[X''_{\4}]\>
\]
such that
\begin{align*}
  &A_{i}[\Xcal'']\leq(3\cdot14+1)\epsilon
  &&\text{for $i=1,2,3,4$}\\
  &\big|
    A_{i}[\Xcal'']-A_{i}[\Xcal]
  \big|
  \leq (4\cdot14)\epsilon
  &&\text{for $i=5,\dots,15$}
\end{align*}

Therefore, for any tuple $(x_{1},\ldots,x_{15})$ of
non-negative numbers there exists a tuple $(x_{1}'', \ldots, x_{15}'')$ such that
\begin{align*}
  &\Phi(x_{1},\ldots,x_{14})\leq\Phi(x''_{1},\ldots,x''_{14})+56\epsilon\\
  &x''_{i}\leq 43\epsilon 
  \text{ for $i=1,2,3,4$}\\
  &|x''_{i}-x_{i}|\leq 56\epsilon
  \text{ for $i=5,\ldots,14$}
\end{align*}

Since the function $\Phi$ is convex and therefore continuous, we can pass
to the limit with $\epsilon\to0$, obtaining the following result.
For any tuple $(x_{1},\ldots,x_{15})$ of non-negative numbers holds
\[
  \Phi(x_{1},\ldots,x_{14})\leq\Phi(0,0,0,0,x_{5},\ldots,x_{14})
\]

On the other hand, given a diagram $[\Ycal]$ with $A_{i}[\Ycal]=0$,
$i=1,2,3,4$, and a tuple of non-negative numbers
$(x_{1},x_{2},x_{3},x_{4})$, we can expand the arrows in the four
admissible fans, that we described above, to lengths
$(x_{1},x_{2},x_{3},x_{4})$. The resulting diagram $[\Ycal'']$
satisfies
\[
  A_{i}[\Ycal'']=
  \begin{cases}
    x_{i}&i=1,2,3,4\\
    A_{i}[\Ycal] & i=5,\ldots,15
  \end{cases}
\]
This implies 
\[
\Phi(x_{1},\ldots,x_{14})\geq\Phi(0,0,0,0,x_{5},\ldots,x_{14})
\]
for any non-negative $(x_{1},\ldots,x_{14})$.
\end{proof}

\bibliographystyle{alpha}       
\bibliography{ReferencesKyb}

\begin{thebibliography}{MMRV02}

\bibitem[DFZ06]{Dougherty-Six-2006}
Randall Dougherty, Christopher Freiling, and Kenneth Zeger.
\newblock Six new non-shannon information inequalities.
\newblock In {\em 2006 IEEE International Symposium on Information Theory},
  pages 233--236. IEEE, 2006.

\bibitem[DFZ11]{Dougherty-Non-Shannon-2011}
Randall Dougherty, Chris Freiling, and Kenneth Zeger.
\newblock Non-shannon information inequalities in four random variables.
\newblock {\em arXiv preprint arXiv:1104.3602}, 2011.

\bibitem[Gro12]{Gromov-Search-2012}
Misha Gromov.
\newblock In a search for a structure, part 1: On entropy.
\newblock Preprint available at \url{http://www. ihes. fr/gromov}, 2012.

\bibitem[KS{\v{S}}12]{Kovavcevic-Hardness-2012}
Mladen Kova{\v{c}}evi{\'c}, Ivan Stanojevi{\'c}, and Vojin {\v{S}}enk.
\newblock On the hardness of entropy minimization and related problems.
\newblock In {\em 2012 IEEE Information Theory Workshop}, pages 512--516. IEEE,
  2012.

\bibitem[Mat93]{Matus-Probabilistic-1993}
Franti{\v{s}}ek Mat{\'u}{\v{s}}.
\newblock Probabilistic conditional independence structures and matroid theory:
  background 1.
\newblock {\em International Journal Of General System}, 22(2):185--196, 1993.

\bibitem[Mat07]{Matus-Infinitely-2007}
Franti{\v{s}}ek Mat{\'u}{\v{s}}.
\newblock Infinitely many information inequalities.
\newblock In {\em Information Theory, 2007. ISIT 2007. IEEE International
  Symposium on}, pages 41--44. IEEE, 2007.

\bibitem[MMRV02]{Makarychev-New-2002}
Konstantin Makarychev, Yury Makarychev, Andrei Romashchenko, and Nikolai
  Vereshchagin.
\newblock A new class of non-{S}hannon-type inequalities for entropies.
\newblock {\em Communications in Information and Systems}, 2(2):147--166, 2002.

\bibitem[MP18]{Matveev-Asymptotic-2018}
Rostislav Matveev and Jacobus~W Portegies.
\newblock Asymptotic dependency structure of multiple signals.
\newblock {\em Information Geometry}, 1(2):237--285, 2018.

\bibitem[MP19a]{Matveev-Arrow-2019}
Rostislav {Matveev} and Jacobus~W. {Portegies}.
\newblock {Arrow Contraction and Expansion in Tropical Diagrams}.
\newblock {\em arXiv e-prints}, page arXiv:1905.05597, May 2019.

\bibitem[MP19b]{Matveev-Conditioning-2019}
Rostislav {Matveev} and Jacobus~W. {Portegies}.
\newblock {Conditioning in tropical probability theory}.
\newblock {\em arXiv e-prints}, page arXiv:1905.05596, May 2019.

\bibitem[MP19c]{Matveev-Tropical-2019}
Rostislav {Matveev} and Jacobus~W. {Portegies}.
\newblock {Tropical diagrams of probability spaces}.
\newblock {\em arXiv e-prints}, page arXiv:1905.04375, May 2019.

\bibitem[MS95]{Matus-Conditional-1995}
Franti{\v{s}}ek Mat{\'u}{\v{s}} and Milan Studen{\`y}.
\newblock Conditional independences among four random variables i.
\newblock {\em Combinatorics, Probability and Computing}, 4(3):269--278, 1995.

\bibitem[Vid12]{Vidyasagar-Metric-2012}
Mathukumalli Vidyasagar.
\newblock A metric between probability distributions on finite sets of
  different cardinalities and applications to order reduction.
\newblock {\em IEEE Transactions on Automatic Control}, 57(10):2464--2477,
  2012.

\bibitem[Yeu08]{Yeung-Information-2008}
Raymond~W Yeung.
\newblock {\em Information theory and network coding}.
\newblock Springer Science \& Business Media, 2008.

\bibitem[ZY98]{Zhang-Characterization-1998}
Zhen Zhang and Raymond~W Yeung.
\newblock On characterization of entropy function via information inequalities.
\newblock {\em IEEE Transactions on Information Theory}, 44(4):1440--1452,
  1998.

\end{thebibliography}
\end{document}